\journal{Information and computation}
\def\ps@pprintTitle{%
 \let\@oddhead\@empty
 \let\@evenhead\@empty
 \def\@oddfoot{\centerline{\thepage}}%
 \let\@evenfoot\@oddfoot}
\newtheorem{theorem}{Theorem}[section]
\newtheorem{proposition}[theorem]{Proposition}
\newtheorem{corollary}[theorem]{Corollary}
\newtheorem{lemma}[theorem]{Lemma}
\theoremstyle{definition}
\newtheorem{definition}[theorem]{Definition}
\newtheorem{example}[theorem]{Example}
\theoremstyle{remark}
\newtheorem{remark}[theorem]{Remark}
\newcommand{\displayproblem}[3]
{%
  \begin{trivlist}%
  \item
  \begin{list}{}{\setlength{\leftmargin}{5.5em}\setlength{\rightmargin}{1em}\setlength{\labelwidth}{\leftmargin}}%
  \item[\bf Problem:] {\sc #1}
  \item[\bf Instance:] #2
  \item[\bf Question:] #3
  \end{list}%
  \end{trivlist}%
}
\newcommand{\sml}[1]{\mathsmaller{ #1 }}
\newcommand{\norm}[1]{{\lVert #1 \rVert}_{1}}
\newcommand{\pnorm}[1]{{\lVert #1 \rVert}_{p}}
\newcommand{\mysum}{\sum}
\newcommand{\supp}{\operatorname{supp}}
\newcommand{\arginf}{\operatorname{arg\,inf}}
\newcommand{\argsup}{\operatorname{arg\,sup}}
\renewcommand{\qed}{\hfill $ \blacksquare $}
\begin{document}

\begin{frontmatter}

\title{On the Entropy of Couplings\tnoteref{t1}}

\tnotetext[t1]{Part of this work was presented at the 2012 IEEE Information Theory Workshop (ITW) \cite{kovacevic}.
               The work was supported by the Ministry of Education, Science and 
               Technological Development of the Republic of Serbia 
               (grants TR32040 and III44003).}

\author[ftn]{Mladen~Kova\v{c}evi\'{c} \corref{cor}}
\cortext[cor]{Corresponding author (kmladen@uns.ac.rs).}

\author[ftn]{Ivan~Stanojevi\'{c}}

\author[ftn]{Vojin~\v{S}enk}

\address[ftn]{Department of Electrical Engineering, 
              University of Novi Sad,\\
              Trg Dositeja Obradovi\'{c}a 6, 21000 Novi Sad, Serbia}



\begin{abstract}
In this paper, some general properties of Shannon information
measures are investigated over sets of probability distributions
with restricted marginals.
Certain optimization problems associated with these functionals are
shown to be NP-hard, and their special cases are found to be essentially
information-theoretic restatements of well-known computational problems,
such as the \textsc{Subset sum} and the \textsc{3-Partition}.
The notion of minimum entropy coupling is introduced and its relevance
is demonstrated in information-theoretic, computational, and
statistical contexts.
Finally, a family of pseudometrics (on the space of discrete
probability distributions) defined by these couplings is studied,
in particular their relation to the total variation distance, and
a new characterization of the conditional entropy is given.
\end{abstract}

\begin{keyword}
Coupling\sep distribution with fixed marginals\sep contingency table\sep
information measure\sep entropy minimization\sep maximization of information divergence\sep
subset sum\sep partition\sep entropy metric\sep measure of dependence.
\end{keyword}

\end{frontmatter}

\section{Introduction}

Distributions with fixed marginals have been studied extensively in the
probability literature (see for example \cite{fixed} and the references
therein).
They are closely related to (and sometimes identified with, as will be
the case in this paper) the concept of coupling, which has proven to be
a very useful proof technique in probability theory \cite{coupling}, and
in particular in the theory of Markov chains \cite{peres}.
In statistics, a related notion of contingency tables is of considerable
importance \cite{csiszar2}.
There is also rich literature on the geometrical and combinatorial
properties of sets of distributions with given marginals, which are known
as transportation polytopes in this context (see, e.g., \cite{brualdi}).
We investigate here these objects from a certain information-theoretic
perspective.
Our results and the general outline of the paper are briefly described
below.

In Section \ref{preliminaries} we recall the definitions and elementary
properties of the quantities under study, namely, information-theoretic
functionals and couplings.
Notational conventions are also introduced here.

In Section \ref{optimization} we discuss properties of Shannon
information measures under constraints on the marginal distributions.
In particular, certain optimization problems associated with these
functionals are studied.
Most of them are, in a sense, the reverses of the well-known optimization
problems, such as the maximum entropy principle, channel capacity, and
information projections.
The general problems of entropy minimization, maximization of mutual
information, and maximization of information divergence are all shown
to be intractable.
Since mutual information is a good measure of dependence of two random
variables, this will also lead to a similar result for all measures of
dependence satisfying R\'{e}nyi's axioms, and to a statistical scenario
where this result might be of interest.
Furthermore, these problems are found to be basically information-theoretic
restatements of some well-known problems in complexity theory.
The infinite-alphabet case is also discussed in this section, in
particular the questions of continuity and existence of extrema.

In Section \ref{distance} we define a family of (pseudo)metrics on the
space of probability distributions, that is based on the so-called
minimum entropy coupling in the same way as the total variation distance
is based on the maximal coupling.
The relation between these distances is derived from Fano's inequality.
Other properties of the new metrics are also discussed, in particular an
interesting characterization of the conditional entropy that they yield.

\section{Preliminaries}
\label{preliminaries}

This section summarizes the definitions and elementary properties of the
notions used in the sequel.
Some conventions that we adopt are as follows.
All random variables are assumed to be discrete, with alphabet $ \mathbb{N} $
-- the set of positive integers, or a subset of $ \mathbb{N} $ of the form
$ \{ 1, \ldots, n \} $.
The base of the logarithm $ \log $ is assumed to be $ 2 $, though this will
be relevant only in the statement of the Pinsker-Csisz\'ar-Kemperman inequality
\eqref{pinsker}.
For a probability distribution $ P = (p_i) $, we denote its support by
$ \supp(P) = \{ i : p_i > 0 \} $.
The size of the support is denoted by either $ |\supp(P)| $ or simply $ |P| $.
We will sometimes write $ P(i) $ for the masses of $ P $.

\subsection{Shannon information measures}
\label{shannon}

Shannon entropy of a random variable $ X $ with probability distribution
$ P = (p_i) $ is defined as
\begin{equation}
  H(X) \equiv H(P) = -\mysum_{i} p_i \log p_i
\end{equation}
with the usual convention $ 0 \log 0 = 0 $ being understood.
$ H $ is a strictly concave functional in $ P $ \cite{cover}.
Further, for a pair of random variables $(X,Y)$ with joint distribution
$ S = (s_{i,j}) $ and respective marginal distributions $ P = (p_i) $ and
$ Q = (q_j) $, the following defines their joint entropy
\begin{equation}
  H(X,Y) \equiv H_{X,Y}(S) = -\mysum_{i,j} s_{i,j} \log s_{i,j} ,
\end{equation}
conditional entropy
\begin{equation}
  H(X|Y) \equiv H_{X|Y}(S) = -\mysum_{i,j} s_{i,j} \log \frac{ s_{i,j} }{ q_j } ,
\end{equation}
and mutual information
\begin{equation}
  I(X;Y) \equiv I_{X;Y}(S) = \mysum_{i,j} s_{i,j} \log \frac{ s_{i,j} }{ p_{i} q_{j} } ,
\end{equation}
again with appropriate conventions.
The above quantities, usually referred to as the Shannon information measures
\cite{shannon1}, are all related by simple identities
\begin{equation}
\label{identity}
 \begin{aligned}
   H(X,Y) &= H(X) + H(Y) - I(X;Y)  \\ 
          &= H(X) + H(Y|X) 
 \end{aligned} 
\end{equation}
and obey the following inequalities
\begin{align}
\label{ineqH}
  \max \big\{ H(X), H(Y) \big\}  \leq H(X&,Y) \leq H(X) + H(Y) , \\
\label{ineqI}
  \min \big\{ H(X), H(Y) \big\} &\geq I(X;Y) \geq 0 , \\
\label{ineqHx}
  0 \leq H(X|Y) &\leq H(X) .
\end{align}
The equalities on the right-hand sides of \eqref{ineqH}--\eqref{ineqHx} are
attained if and only if $ X $ and $ Y $ are independent.
The equalities on the left-hand sides of \eqref{ineqH} and \eqref{ineqI} are
attained if and only if $ X $ deterministically depends on $ Y $ (i.e., iff
$ X $ is a function of $ Y $), or vice versa.
The equality on the left-hand side of \eqref{ineqHx} holds if and only if $ X $
deterministically depends on $ Y $.
We will use some of these properties in our proofs; for their demonstration
we point the reader to the standard reference \cite{cover}. 

From identities \eqref{identity} one immediately observes the following: 
Over a set of bivariate probability distributions with fixed marginals
(and hence fixed marginal entropies $ H(X) $ and $ H(Y) $), all the above
functionals differ up to an additive constant (and a minus sign in the case
of mutual information), and hence one can focus on studying only one of them
and easily translate the results for the others.
This fact will also be exploited later.

Relative entropy (information divergence, Kullback-Leibler divergence) of
the distribution $ P $ with respect to the distribution $ Q $ is the following
functional
\begin{equation}
  D(P||Q) = \mysum_{i} p_i \log \frac{ p_i }{ q_i } ,
\end{equation}
where $ 0 \log \frac{0}{q} = 0 $ and $ p \log \frac{p}{0} = \infty $ for every
$ q \geq 0 $, $ p > 0 $.

\subsection{Couplings of probability distributions}
\label{transport}

A coupling of two probability distributions $ P $ and $ Q $ is a bivariate
distribution $ S $ (on the product space, in our case $ \mathbb{N}^2 $) with
marginals $ P $ and $ Q $.
This concept can also be defined for random variables in a similar manner \cite{coupling}.

Let $ \Gamma_{n}^\sml{(1)} $ and $ \Gamma_{n\times m}^\sml{(2)} $ denote the
sets of one- and two-dimensional probability distributions with alphabets
of size $ n $ and $ n \times m $, respectively
\begin{align}
  \Gamma_{n}^\sml{(1)} 
     &= \Bigg\{ (p_i) \in \mathbb{R}^{n} \,:\,
                 p_i \geq 0 \,,\, \mysum_{i} p_i = 1 \Bigg\}  \\
  \Gamma_{n\times m}^\sml{(2)} 
     &= \Bigg\{ (p_{i,j}) \in \mathbb{R}^{n\times m} \,:\,
                 p_{i,j} \geq 0 \,,\, \mysum_{i,j} p_{i,j} = 1 \Bigg\}
\end{align}
and let $ \mathcal{C}(P,Q) $ denote the set of all couplings of
$ P \in \Gamma_{n}^\sml{(1)} $ and $ Q \in \Gamma_{m}^\sml{(1)} $
\begin{equation}
  \mathcal{C}(P,Q)
     = \Bigg\{ S \in \Gamma_{n\times m}^\sml{(2)} \,:\, 
              \mysum_{j} s_{i,j} = p_i \,,\, \mysum_{i} s_{i,j} = q_j \Bigg\} .
\end{equation}
The sets $ {\mathcal C}(P,Q) $ are restrictions to $ \mathbb{R}_{+}^{n\times m} $
of parallel affine $ (|P| - 1)(|Q| - 1) $-dimensional subspaces of $ \mathbb{R}^{n \times m} $.
They are convex, compact, and form a partition of $ \Gamma_{n\times m}^\sml{(2)} $.

The set of distributions with fixed marginals is basically the set of
matrices with nonnegative entries and prescribed row and column sums.
Such sets are special cases of the so-called transportation polytopes
\cite{brualdi}.

We will also find it interesting to study information measures over the
sets of distributions whose one marginal and the support of the other are
fixed
\begin{equation}
  {\mathcal C}(P,m) = \bigcup_{ Q \in \Gamma_{m}^\sml{(1)} } {\mathcal C}(P,Q) .
\end{equation}
These sets are also convex polytopes and form a partition of
$ \Gamma_{n\times m}^\sml{(2)} $ when $ P $ varies through $ \Gamma_{n}^\sml{(1)} $.

\section{Information measures and couplings}
\label{optimization}

In the following we analyze some general properties of Shannon information
measures, as well as natural optimization problems associated with these
functionals, over domains of the form $ {\mathcal C}(P,Q) $ and $ {\mathcal C}(P,m) $.
The proofs presented are not difficult, but they have a number of important
consequences, as discussed in Section \ref{generalizations}.
Some closely related problems over $ {\mathcal C}(P,Q) $, in the context of
computing the metric $ \underline{\Delta}_1(P,Q) $ (defined in Section
\ref{distance}), are also studied in \cite{vidyasagar}.

\subsection{Optimization over $ {\mathcal C}(P,Q) $}
\label{bothmarginals}

Due to \eqref{identity}, we can focus on the optimization of $ H_{X,Y} $ only.
In regard to this, we introduce the following definition, whose relevance will
be demonstrated throughout this and the following section.

\begin{definition}
 \emph{Minimum entropy coupling} of probability distributions $ P $ and $ Q $
is a bivariate distribution $ S^* \in {\mathcal C}(P,Q) $ that minimizes
the entropy functional $ H \equiv H_{X,Y} $, i.e., 
\begin{equation}
  H(S^*) = \inf_{ S \in {\mathcal C}(P,Q) } H(S) .
\end{equation}
\end{definition}

Note that the maximization of entropy over $ {\mathcal C}(P,Q) $ is trivial --
the maximizer is always $ P \times Q = (p_i q_j) $.

Minimum entropy couplings exist for any $ P \in \Gamma_{n}^\sml{(1)} $ and
$ Q \in \Gamma_{m}^\sml{(1)} $ because sets $ {\mathcal C}(P,Q) $ are compact
(closed and bounded) and entropy is continuous over $ \Gamma_{n\times m}^\sml{(2)} $
and hence attains its extrema.
Note, however, that they need not be unique.
From the strict concavity of entropy one concludes that the minimum entropy
couplings must be vertices of the polytope $ {\mathcal C}(P,Q) $ (i.e., they
cannot be expressed as $ \lambda S + (1 - \lambda)T $, with
$ S, T \in {\mathcal C}(P,Q) $, $ \lambda \in (0,1) $).
Finally, from identities \eqref{identity} it follows that the minimizers of
$ H_{X,Y} $ over $ {\mathcal C}(P,Q) $ are simultaneously the minimizers of
$ H_{X|Y} $ and $ H_{Y|X} $ and the maximizers of $ I_{X;Y} $, and hence could
also be called \emph{maximum mutual information couplings} for example.

From the last observation we see that minimum entropy couplings express the
largest dependence (measured by $ I_{X;Y} $) of random variables having
particular marginal distributions; this is further discussed in Section
\ref{renyiax}.

\begin{proposition}
\label{Hmin}
 The functional
$ H_{\min} : \Gamma_{n}^\sml{(1)} \times \Gamma_{m}^\sml{(1)} \to \mathbb{R} $,
defined by $ H_{\min}(P,Q) = \inf_{ S \in {\mathcal C}(P,Q) } H(S) $, is
continuous in $ (P,Q) $ (with respect to the product topology).
\end{proposition}
\begin{remark}
 Throughout the paper, the assumed topology on $ \Gamma_{n}^\sml{(1)} $ and
similar sets of probability distributions is the one induced by the $ \ell_1 $
norm, denoted $ \norm{\cdot} $.
\end{remark}
\begin{proof}
 The problem at hand is a constrained optimization problem, and we will use
a standard result in the field -- the Berge's maximum theorem
\cite[Thm 9.14]{sundaram}.
To see that the conditions of the theorem are satisfied, observe that entropy
is continuous over $ \Gamma_{n\times m}^\sml{(2)} $, and that the mapping
$ (P,Q) \mapsto {\mathcal C}(P,Q) $, viewed as a correspondence%
\footnote{\,The term correspondence denotes a set-valued map (i.e.,
multi-valued map).
Much of the study of such maps was motivated by their applications in
mathematical economics.
For a definition of continuity of correspondences, as well as the related
notions of lower and upper hemi-continuity, see \cite{sundaram}.},
is compact-valued and continuous \cite{bergin}.
\end{proof}

Berge's maximum theorem also implies that the mapping
$ (P,Q) \mapsto \arginf_{ S \in {\mathcal C}(P,Q) } H(S) $, which
maps distributions $ P, Q $ to the set of minimum entropy couplings
in $ {\mathcal C}(P,Q) $, is a compact-valued upper hemi-continuous
correspondence on $ \Gamma_{n}^\sml{(1)} \times \Gamma_{m}^\sml{(1)} $.
It is in fact finite-valued because minimum entropy couplings are
necessarily vertices of $ {\mathcal C}(P,Q) $, as commented above.
In the following we analyze the computational complexity of finding
an element of the set $ \arginf_{ S \in {\mathcal C}(P,Q) } H(S) $.

Let \textsc{Minimum entropy coupling} be the following computational
problem: Given probability distributions $ P = (p_1, \ldots, p_n) $ and
$ Q = (q_1, \ldots, q_m) $ (with%
\footnote{\,The probabilities being rational numbers is not just an algorithmic
requirement, it is also the most important case in statistics, where empirical
distributions and contingency tables have precisely such entries.}
$ p_i, q_j \in \mathbb{Q} $), find the minimum entropy coupling of $ P $
and $ Q $.
The proof of the following theorem relies on the well-known NP-complete
problem \cite{garey}
\displayproblem
  {Subset sum}
  {Positive integers $ d_1, \ldots, d_n $ and $ s $.}
  {Is there a $ J \subseteq \{ 1, \ldots, n \} $ such that $ \sum_{j \in J} d_j = s $ ?}

\begin{theorem}
\label{minentropy}
 \textsc{Minimum entropy coupling} is NP-hard. 
\end{theorem}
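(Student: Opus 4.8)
The plan is to reduce \textsc{Subset sum} to \textsc{Minimum entropy coupling} in polynomial time. Given an instance $d_1,\ldots,d_n$ and target $s$ of \textsc{Subset sum}, let $D=\sum_{i} d_i$ and suppose without loss of generality that $0<s<D$ (the boundary cases are trivial). I would construct two distributions $P$ and $Q$ whose minimum entropy coupling encodes the existence of a subset summing to $s$. The natural choice is to take $P=(d_1/D,\ldots,d_n/D)$, so the masses of $P$ are exactly the normalized input numbers, and to take $Q=(s/D,\,(D-s)/D)$ a two-point distribution recording the target. A coupling $S\in\mathcal{C}(P,Q)$ then assigns each mass $p_i=d_i/D$ to the two columns of $Q$; writing $s_{i,1}$ for the amount sent to column $1$, the column-sum constraint says $\sum_i s_{i,1}=s/D$.

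The key observation is to characterize when the joint entropy is minimized. By identity \eqref{identity}, minimizing $H_{X,Y}(S)$ over $\mathcal{C}(P,Q)$ is equivalent to maximizing $I_{X;Y}(S)$, and by \eqref{ineqI} the mutual information is maximized, attaining the value $H(Y)=H(Q)$, precisely when $Y$ is a deterministic function of $X$. In our setup $Y$ is a function of $X$ exactly when each mass $p_i$ is sent entirely to one column, i.e.\ every $s_{i,1}\in\{0,p_i\}$. Such a split exists while respecting the column constraint $\sum_i s_{i,1}=s/D$ if and only if there is a subset $J\subseteq\{1,\ldots,n\}$ with $\sum_{i\in J} d_i/D = s/D$, that is $\sum_{i\in J} d_i = s$. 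Hence the \textsc{Subset sum} instance is a YES-instance if and only if the minimum joint entropy equals the value $H(P)+H(Q)-H(Q)=H(P)$, equivalently if and only if $\max_{S} I_{X;Y}(S)=H(Q)$.

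To turn this equivalence into a valid reduction I would argue both directions. If a subset $J$ exists, the deterministic coupling it defines lies in $\mathcal{C}(P,Q)$ and achieves $H_{X,Y}=H(X)=H(P)$, which is the minimum possible since \eqref{ineqH} gives $H(X,Y)\geq\max\{H(X),H(Y)\}\geq H(P)$ always. Conversely, if no subset sums to $s$, then no coupling can make $Y$ a function of $X$, so by the equality condition in \eqref{ineqI} every coupling has $I_{X;Y}(S)<H(Q)$ strictly, and therefore $H_{X,Y}(S)=H(P)+H(Q)-I_{X;Y}(S)>H(P)$. Thus computing (or even approximating closely enough to distinguish from $H(P)$) the minimum entropy coupling decides \textsc{Subset sum}. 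Since $P$ and $Q$ have rational masses computable in polynomial time from the input, and $H(P)$ is an explicit rational-in-logs threshold, the reduction is polynomial.

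The step I expect to be the main obstacle is making the decision-versus-optimization bridge fully rigorous: \textsc{Minimum entropy coupling} as stated is a search/optimization problem, so I must phrase NP-hardness as a Turing (or Cook) reduction, observing that an oracle returning the minimum entropy coupling $S^*$ lets us read off whether $Y$ is a function of $X$ (equivalently whether $H_{X,Y}(S^*)=H(P)$), thereby answering \textsc{Subset sum}. A secondary subtlety is confirming that the value $H(P)$ is attained \emph{only} by deterministic (function-of-$X$) couplings, which is exactly the equality condition on the left-hand side of \eqref{ineqI} quoted earlier; I would lean on that stated characterization rather than reprove it. Everything else is routine bookkeeping with the normalization constant $D$.
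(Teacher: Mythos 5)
Your reduction is exactly the paper's: the same construction $P=(d_1/D,\ldots,d_n/D)$, $Q=(s/D,(D-s)/D)$, the same observation that a subset summing to $s$ exists iff some coupling makes $Y$ a deterministic function of $X$, which happens iff the minimum joint entropy attains the lower bound $H(P)$ from \eqref{ineqH}. The proposal is correct and, if anything, slightly more explicit than the paper about the Turing-reduction bookkeeping.
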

\begin{proof}
 We demonstrate a reduction from the \textsc{Subset sum} to the
\textsc{Minimum entropy coupling}.
Let there be given an instance of the \textsc{Subset sum}, i.e., a set of positive
integers $ s; d_1, \ldots, d_n $, $ n \geq 2 $.
Let $ D = \sum_{i=1}^{n} d_i $, and let $ p_i = d_i / D $, $ q = s / D $
(assume that $ s < D $, the problem otherwise being trivial).
Denote $ P = ( p_1, \ldots, p_n ) $ and $ Q = ( q, 1-q ) $.
The question we are trying to answer is whether there is a
$ J \subseteq \{ 1, \ldots, n \} $ such that $ \sum_{j \in J} d_j = s $, 
i.e., such that $ \sum_{j \in J} p_j = q $.
Observe that this happens if and only if there is a matrix $ S $ with row sums
$ P = ( p_1, \ldots, p_n) $ and column sums $ Q = ( q, 1-q ) $, which has exactly
one nonzero entry in every row (or, in probabilistic language, a distribution
$ S \in {\mathcal C}(P,Q) $ such that $ Y $ deterministically depends on $ X $).
We know that in this case, and only in this case, the entropy of $ S $ would be
equal to $ H(P) $ \cite{cover}, which is by \eqref{ineqH} a lower bound on entropy
over $ {\mathcal C}(P,Q) $.
In other words, if such a distribution exists, it must be the minimum entropy
coupling.
Therefore, if we could find the minimum entropy coupling, we could easily decide
whether it has one nonzero entry in every row, thereby solving the given instance
of the \textsc{Subset sum}.
\end{proof}

\begin{remark}
 We have shown that the problem of deciding whether there is a distribution
$ S \in {\mathcal C}(P,Q) $ with $ H(S) = H(P) $ is NP-complete even when
the distribution $ Q $ is allowed to have only two masses.
In this case it is equivalent to the \textsc{Subset sum} problem and represents
its information theoretic analogue (and it implies the hardness of the
\textsc{Minimum entropy coupling}).
When this restriction on $ Q $ is removed, the problem is equivalent to deciding
whether there exist subsets with prescribed sums $ s_1, \ldots, s_m $.
This problem is NP-complete in the strong sense \cite{garey} because it is
a generalization of the \textsc{3-Partition} problem which we recall below.
Since the reduction in the proof of the previous theorem is clearly
pseudo-polynomial \cite{garey} (it is just a division of all numbers by
$ D $), it follows that \textsc{Minimum entropy coupling} is strongly NP-hard.
\end{remark}

It would be interesting to determine whether the \textsc{Minimum entropy
coupling} belongs to FNP \cite{pap}, but this appears to be quite difficult.
Namely, given the optimal solution, it is not obvious how to verify (in polynomial
time) that it is indeed optimal.
A similar situation arises with the decision version of this problem:
Given $ P $ and $ Q $ and a threshold $ h $, is there a distribution
$ S \in {\mathcal C}(P,Q) $ with entropy $ H(S) \leq h $?
Whether this problem belongs to NP is another interesting question (which we will
not be able to answer here).
We will not go into these details further; we mention instead one closely related
problem which has been studied in the literature:
\displayproblem
  { Sqrt sum }
  { Positive integers $d_1,\ldots,d_n$, and $k$. }
  { Decide whether $\sum_{i=1}^n \sqrt{d_i} \leq k$ ? }
This problem, though ``conceptually simple" and bearing certain resemblance
with the above decision version of the entropy minimization problem, is not
known to be solvable in NP \cite{etessami} (it is solvable in PSPACE).
\subsection{Optimization over ${\mathcal C}(P,m)$}
\label{onemarginal}
\begin{definition}
 \emph{Optimal channel} with $ m $ outputs and input distribution $ P $ is
a bivariate distribution $ S^* \in {\mathcal C}(P,m) $ that maximizes the
mutual information functional, i.e., 
\begin{equation}
  I_{X;Y}(S^*) = \sup_{ S \in {\mathcal C}(P,m) } I_{X;Y}(S) .
\end{equation}
\end{definition}

Since $ H(X) $ is fixed, maximizing $ I_{X;Y} $ over $ {\mathcal C}(P,m) $
is equivalent to minimizing the conditional entropy $ H(X|Y) $, and is the
only interesting optimization problem over domains of this form.
Namely, the minimizer of $ H(X,Y) $ and $ H(Y|X) $ over ${\mathcal C}(P,m)$
is any joint distribution having at most one nonzero entry in each row (i.e.,
such that $ Y $ deterministically depends on $ X $), and the maximizer is
$ P \times U_m $, where $ U_m $ is the uniform distribution over
$ \{ 1, \ldots, m \} $.

As in the case of minimum entropy couplings, existence of optimal channels
for any $ P \in \Gamma_{n}^\sml{(1)} $ and $ m \in \mathbb{N} $ follows from
the continuity of $ I_{X;Y} $ and the compactness of $ {\mathcal C}(P,m) $.
They are in general not unique.
Since $ I_{X;Y} $ is convex when one marginal is fixed \cite[Thm 2.7.4]{cover},
we again have a convex maximization problem and conclude that optimal
channels are vertices of $ {\mathcal C}(P,m) $.

By the Berge's maximum theorem we have the following claim.
\begin{proposition}
\label{Imax}
 The functional
$ I_{\max} : \Gamma_{n}^\sml{(1)} \times \mathbb{N} \to \mathbb{R} $,
defined by $ I_{\max}(P,m) = \sup_{ S \in {\mathcal C}(P,m) } I_{X;Y}(S) $,
is continuous in $ P $.
The mapping $ P \mapsto \argsup_{ S \in {\mathcal C}(P,m) } I_{X;Y}(S) $ is a
compact-valued upper hemi-continuous correspondence on $ \Gamma_{n}^\sml{(1)} $.
\hfill \qed
\end{proposition}

To study the computational complexity of the above problem, define \textsc{Optimal
channel} as follows: Given a distribution $ P = ( p_1, \ldots, p_n ) $ and a number
$ m $ (with $ p_i \in \mathbb{Q} $, $ m \in \mathbb{N} $), find the distribution
$ S \in {\mathcal C}(P,m) $ which maximizes the mutual information.
This problem is the reverse of the channel capacity in the sense that now the input
distribution (the distribution of the source) is fixed, and the maximization is over
the conditional distributions.
In other words, given a source, we are asking for the channel with a given number
of outputs which has the largest mutual information.

We will use the well-known \textsc{Partition} (or \textsc{Number partitioning}) 
problem \cite{garey}. 
\displayproblem
  { Partition }
  { Positive integers $ d_1, \ldots, d_n $. }
  { Is there a partition of $ \{ d_1, \ldots, d_n \} $ into two subsets with equal sums? }
This is clearly a special case of the \textsc{Subset sum}.
It can be solved in pseudo-polynomial time by dynamic programming methods \cite{garey}.
But the following closely related problem is much harder.
\displayproblem
  { 3-Partition }
  { Positive integers $ d_1, \ldots, d_{3m} $ with $ s/4 < d_j < s/2 $, where
    $ s = \sum_{j} d_j / m $. }
  { Is there a partition of $ \{ 1, \ldots, 3m \} $ into $ m $ subsets $ J_1, \ldots, J_m $
    (disjoint and covering $ \{ 1, \ldots, 3m \} $) such that $ \sum_{ j \in J_i } d_j $
     are all equal? (The sums are necessarily $ s $ and every $ J_i $ has $ 3 $ elements.)}
This problem is NP-complete in the strong sense \cite{garey}, i.e., no pseudo-polynomial
time algorithm for it exists unless P=NP.
\begin{theorem}
\label{channel}
  \textsc{Optimal channel} is NP-hard. 
\end{theorem}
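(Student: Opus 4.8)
The plan is to reduce the strongly NP-complete \textsc{3-Partition} problem to \textsc{Optimal channel}. Since $H(X)=H(P)$ is fixed over ${\mathcal C}(P,m)$ and $I(X;Y)=H(X)-H(X|Y)$, maximizing the mutual information is the same as minimizing $H(X|Y)$. The first step is to pin down the structure of an optimizer. Writing $I$ as a function of the channel $W(j\mid i)=s_{i,j}/p_i$ with the input distribution $P$ held fixed, $I$ is convex in $W$ (as already noted) and the feasible region is a product of simplices, one per row $i$, so the maximum is attained at a vertex, i.e.\ at a \emph{deterministic} channel assigning each $i$ to a single output $j(i)\in\{1,\dots,m\}$. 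Such a channel is exactly a partition of $\{1,\dots,n\}$ into $m$ classes $A_1,\dots,A_m$, and for it $H(Y\mid X)=0$, so $I(X;Y)=H(Y)=H(q_1,\dots,q_m)$ with $q_j=\mysum_{i\in A_j}p_i$. Hence the optimal value of \textsc{Optimal channel} equals $\max_{A_1,\dots,A_m}H(q_1,\dots,q_m)$, which is bounded above by $\log m$, with equality if and only if some partition makes all class sums equal to $1/m$.

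The second step is the reduction itself. Given a \textsc{3-Partition} instance $d_1,\dots,d_{3m}$ with $\mysum_j d_j=mk$, I would put $p_i=d_i/(mk)$ and $P=(p_1,\dots,p_{3m})$ (all $p_i>0$ since $d_i>k/4$), and feed $(P,m)$ to \textsc{Optimal channel}. A partition of the indices into $m$ classes each of $p$-mass $1/m$ corresponds precisely to a partition of the $d_j$ into $m$ groups each summing to $k$, that is, to a solution of the \textsc{3-Partition} instance. By the structural fact of the previous paragraph, the instance is a yes-instance if and only if the optimal mutual information equals $\log m$. The construction is plainly polynomial and produces rational masses, so a polynomial-time algorithm for \textsc{Optimal channel} would decide \textsc{3-Partition}; since the latter is NP-complete in the strong sense, this yields (strong) NP-hardness.

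The step I expect to require the most care is converting the returned optimizer into a decision, since comparing the optimal value directly with $\log m$ runs into the real-valued/irrationality issue already flagged for \textsc{Minimum entropy coupling}. The way around it is to exploit that an optimal $S^{\ast}$ is a vertex, hence deterministic: from $S^{\ast}$ one reads off the induced partition and checks, by rational arithmetic alone, whether all its column sums equal $1/m$. Indeed $I(S^{\ast})=\log m$ forces $H(Y)=\log m$ (so $Y$ uniform) and $H(Y\mid X)=0$ (so $Y$ a function of $X$), both exactly verifiable; and conversely, whenever a balanced partition exists it attains the global maximum $\log m$, so any maximizer the oracle returns is itself uniform and deterministic. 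Thus testing uniformity of the column sums of $S^{\ast}$ decides \textsc{3-Partition}, which completes the plan.
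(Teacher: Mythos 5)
Your proposal is correct and follows essentially the same route as the paper: a reduction from the strongly NP-complete \textsc{3-Partition} problem with $p_i=d_i/D$, exploiting the bound $I(X;Y)\leq\log m$ over ${\mathcal C}(P,m)$, attained if and only if the coupling is deterministic with uniform second marginal. Your two refinements --- that an optimizer may be taken at a vertex (a deterministic channel) and that the final test should be the exactly verifiable rational condition that all column sums of the returned optimizer equal $1/m$, rather than checking only the one-nonzero-per-row property or comparing real-valued quantities --- tighten a step the paper treats informally.
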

\begin{proof}
 We prove the claim by reducing 3-\textsc{Partition} to \textsc{Optimal channel}.
Let there be given an instance of the 3-\textsc{Partition} as described above, and
let $ p_i = d_i / D $, where $ D = \sum_i d_i $.
Observe that a partition with desired properties exists if and only if there is
a matrix $ C \in \mathcal{C}(P,m) $, $ P = (p_i) $, having column sums $ s/D = 1/m $
and having exactly one nonzero entry in every row (i.e., if and only if a bivariate
distribution $ C \in \mathcal{C}(P,m) $ exists with the uniform second marginal
$ \left( \frac{1}{m}, \ldots, \frac{1}{m} \right) $, and such that $ Y $
deterministically depends on $ X $).
Furthermore, a distribution $ C \in \mathcal{C}(P,m) $ has such properties if
and only if it satisfies $ I_{X;Y}(C) = \log m $ (to see this, observe that
\begin{inparaenum}[(i)]
\item
$ I_{X;Y}(S) \leq H(S_Y) $, where $ S_Y $ is the second marginal of $ S $, with
equality if and only if $ S $ has at most one nonzero entry in every row, and
\item
$ H(S_Y) \leq \log m $, whenever $ S \in \mathcal{C}(P,m) $, with equality if and
only if $ S_Y $ is uniform).
\end{inparaenum}
Since $ \log m $ is an upper bound on $ I_{X;Y} $ over $ \mathcal{C}(P,m) $,
such a distribution $ C $ would necessarily be the maximizer of $ I_{X;Y} $.
To conclude, if we could solve the \textsc{Optimal channel} with instance
$ ( p_1, \ldots, p_{3m} ); m $, we could easily decide whether the maximizer has
column sums $ 1/m $ and exactly one nonzero entry in every row, thereby solving
the original instance of the 3-\textsc{Partition}.\phantom{xxxxx}
\end{proof}

Note that the problem remains NP-hard even when the number of channel outputs ($ m $)
is fixed in advance and is not a part of the input instance.
For example, maximization of $ I_{X;Y} $ over $ {\mathcal C}(P,2) $ is essentially
equivalent to the \textsc{Partition} problem.
Furthermore, since the transformation in the proof of Theorem \ref{channel} is
pseudo-polynomial \cite{garey}, \textsc{Optimal channel} is strongly NP-hard and,
unless P=NP, has no pseudo-polynomial time algorithm.

\subsection{Comments and generalizations}
\label{generalizations}

In this subsection we put the above optimization problems in a more general context,
and discuss their relevance and certain generalizations.

\subsubsection{Entropy minimization}

Entropy minimization, taken in the broadest sense, is a very important problem.
Watanabe \cite{pattern} has shown, for example, that many algorithms for clustering
and pattern recognition can be characterized as suitably defined entropy minimization
problems.
In theoretical computer science, a class of combinatorial optimization problems based
on entropy minimization has been studied extensively (see \cite{cardinal} and the
references therein).
These include minimum entropy set cover, minimum entropy graph coloring, minimum
entropy orientation, etc.

A much more familiar problem in information theory is that of entropy maximization.
The so-called \emph{Maximum entropy principle} formulated by Jaynes \cite{jaynes}
states that, among all proba\-bility distributions satisfying certain constraints
(expressing our knowledge about the system), one should pick the one with maximum
entropy.
It has been recognized by Jaynes, as well as many other researchers, that this
choice gives the least biased, the most objective distribution consistent with
the information one possesses about the system.
Consequently, the problem of maximizing entropy under constraints has been thoroughly
studied (see, e.g., \cite{harr,kapur1}).
It has been argued, however, that minimum entropy distributions can also be of
interest in many contexts.
The MinMax information measure, for example, has been introduced \cite{kapur2,yuan}
as a measure of the amount of information contained in a given set of constraints,
and it is based both on maximum and minimum entropy distributions.

One could formalize the problem of entropy minimization as follows:
Given a polytope (by a system of inequalities with rational coefficients, say) in
the set of probability distributions, find the distribution $ S^* $ which minimizes
the entropy functional $ H $.
(If the coefficients are rational, then all the vertices are rational, i.e., have
rational coordinates.
Therefore, the minimum entropy distribution has finite description and is well-defined
as an output of a computational problem.)
This problem is strongly NP-hard and remains such over transportation polytopes,
as established above.

\subsubsection{R\'enyi entropy minimization}

R\'enyi entropy of order $ \alpha \geq 0 $ of a random variable $ X $ with
distribution $ P $ is defined as
\begin{equation}
  H_\alpha(X) \equiv H_\alpha(P) =
     \frac{ 1 }{ 1 - \alpha } \log \mysum_{i} p_i^\alpha ,
\end{equation}
with
\begin{equation}
  H_0(P) = \lim_{ \alpha \to 0 } H_\alpha(P) = \log |P| ,
\end{equation}
and
\begin{equation}
  H_1(P) = \lim_{ \alpha \to 1^+ } H_\alpha(P) = H(P) .
\end{equation}
It was introduced by R\'enyi \cite{renyi} on axiomatic grounds as a generalization
of the Shannon entropy, and represents an important functional in information theory.
Joint R\'enyi entropy of the pair $ (X, Y) $ having distribution $ S = (s_{i,j}) $
is naturally defined as
\begin{equation}
  H_\alpha(X,Y) \equiv H_\alpha(S) =
     \frac{ 1 }{ 1 - \alpha } \log \mysum_{i,j} s_{i,j}^\alpha .
\end{equation}
Due to subadditivity (for $ \alpha < 1 $) and superadditivity (for $ \alpha > 1 $)
properties of the function $ x^\alpha $ for $ x \geq 0 $, it follows that
\begin{equation}
  H_\alpha(X,Y) \geq \max \big\{ H_\alpha(X) , H_\alpha(Y) \big\}
\end{equation}
with equality if and only if $ X $ is a function of $ Y $, or vice versa.
In the same way as in Theorem \ref{minentropy} we then conclude that the problem of
minimization of the R\'enyi entropies $ H_\alpha $ over arbitrary polytopes is strongly
NP-hard, for any $ \alpha \geq 0 $.
Note that, for $ \alpha > 1 $, this problem is equivalent to the maximization of the
$ \ell_\alpha $ norm (see also \cite{maxnorm,maxnorm2} for different proofs of the
NP-hardness of norm maximization).
Interestingly, however, the minimization of the R\'enyi entropy of order $ \infty $,
defined as
\begin{equation}
  H_\infty(P) = \lim_{ \alpha \to \infty } H_\alpha(P) = -\log \max_i p_i ,
\end{equation}
is polynomial-time solvable; it is equivalent to the maximization of the $ \ell_\infty $
norm \cite{maxnorm}.
For $ \alpha < 1 $, the minimization of R\'enyi entropy is equivalent to the
minimization of $ \ell_\alpha $ (which is not a norm in the strict sense), a problem
arising in compressed sensing \cite{minnorm}.

Hence, as we have seen throughout this section, various problems from computational
complexity theory can be reformulated as information-theoretic optimization problems.
(Observe also the similarity of the \textsc{Sqrt sum} and the minimization of R\'enyi
entropy of order $1/2$.)

\subsubsection{Other information measures}

Maximization of mutual information is also a problem of great importance in information
theory.
The so-called Maximum mutual information criterion has found many applications, e.g.,
for feature selection \cite{battiti} and the design of classifiers \cite{deng}.
Another familiar example is that of the capacity of a communication channel which is
defined precisely as the maximum of the mutual information between the input and the
output of a channel.

We have illustrated the general intractability of the problem of maximization
of $ I_{X;Y} $ by exhibiting two simple classes of polytopes over which the problem
is strongly NP-hard.
We also mention here one possible generalization of this problem -- maximization of
information divergence.
Namely, since for $ S \in {\mathcal C}(P,Q) $
\begin{equation}
  I_{X;Y}(S) = D(S||P\times Q) ,
\end{equation}
one can naturally consider the more general problem of maximizing $ D(S||T) $ when
$ S $ belongs to some convex region and $ T $ is fixed.
Related problems of finding maximizers of information divergence from exponential
families have been studied in \cite{ay, matus, rauh, rauh_phd}.

Formally, let \textsc{Information divergence maximization} be the following
computational problem:
Given a rational convex polytope $ \mathcal S $ in the set of probability distributions,
and a distribution $ T $, find the distribution $ S \in {\mathcal S} $ which maximizes
$ D(\cdot||T) $.
This is again a convex maximization problem because $ D(S||T) $ is convex in the pair
$ (S, T) $ \cite{csiszar}.

\begin{corollary}
  \textsc{Information divergence maximization} is NP-hard.
\hfill \qed
\end{corollary}

Note that the reverse problem, namely the minimization of information divergence,
defines an information projection of $ T $ onto the region $ \mathcal S $ \cite{csiszar}.

\subsubsection{Measures of statistical dependence}
\label{renyiax}

We conclude this subsection with one more generalization of the problem of maximization 
of mutual information.
Namely, this problem can also be seen as a statistical problem of expressing the largest
possible dependence between two given random variables.

Consider the following statistical scenario.
A system is described by two random variables (taking values in $ \mathbb{N} $) whose
joint distribution is unknown; only some constraints that it must obey are given.
The set of all distributions satisfying these constraints is usually called a statistical
model.
\begin{example}
 Suppose we have two correlated information sources obtained by independent
drawings from a discrete bivariate probability distribution, and suppose we
only have access to individual streams of symbols (i.e., streams of symbols
from either one of the sources, but not from both simultaneously) and can
observe the relative frequencies of the symbols in each of the streams.
We therefore ``know" probability distributions of both sources (say $ P $ and $ Q $),
but we don't know how correlated they are. Then the ``model" for this joint
source would be $ {\mathcal C}(P,Q) $. In the absence of any additional
information, we must assume that some $ S \in {\mathcal C}(P,Q) $ is the ``true"
distribution of the source.
\end{example}

Given such a model, we may ask the following question: What is the largest
possible dependence of the two random variables? How correlated can they possibly
be? This question can be made precise once a dependence measure is specified, and
this is done next.

A. R\' enyi \cite{renyidep} has formalized the notion of probabilistic dependence
by presenting axioms which a ``good" dependence measure $ \rho $ should satisfy.
These axioms, adapted for discrete random variables, are listed below.
\begin{enumerate}%
 \item[(A)]  $\rho(X,Y)$ is defined for any two random variables $ X $, $ Y $, neither 
             of which is constant with probability $1$.%
 \item[(B)]  $0 \leq \rho(X,Y) \leq 1$.%
 \item[(C)]  $\rho(X,Y) = \rho(Y,X)$.%
 \item[(D)]  $\rho(X,Y) = 0$ iff $X$ and $Y$ are independent.%
 \item[(E)]  $\rho(X,Y) = 1$ iff $X=f(Y)$ or $Y=g(X)$.%
 \item[(F)]  If $f$ and $g$ are injective functions, then $\rho(f(X),g(Y)) = \rho(X,Y)$.%
\end{enumerate}%
Actually, R\' enyi considered axiom (E) to be too restrictive and demanded only the
``if part". It has been argued subsequently \cite{bell}, however, that this is a
substantial weakening.
We will find it convenient to consider the stronger axiom given above.
As an example of a good measure of dependence, one could take precisely the mutual
information; its normalized variant $ I(X;Y) / \min\{ H(X), H(Y) \} $ satisfies
all the above axioms.

Let us now formalize the question asked above.
Namely, let \textsc{maximal $\rho$--dependence} be the following problem:
Given two probability distributions $ P = ( p_1, \ldots, p_n ) $ and
$ Q = ( q_1, \ldots, q_m ) $, $ p_i, q_j \in \mathbb{Q} $, find the distribution
$ S \in {\mathcal C}(P,Q) $ which maximizes $ \rho $.
The proof of the following claim is identical to the one given for mutual information
(entropy) in Section \ref{bothmarginals} and is therefore omitted.
\begin{theorem}
 Let $ \rho $ be a measure of dependence satisfying axioms \emph{(A)--(F)}.
Then \textsc{maximal $ \rho $--dependence} is NP-hard.
\hfill \qed
\end{theorem}

The intractability of the problem over more general statistical models is now
a simple consequence.

\subsection{Infinite alphabets}
\label{infinite}

We conclude this section with a discussion on the properties of information
measures over domains of the form $ {\mathcal C}(P,Q) $ and $ {\mathcal C}(P,m) $
in the case when the distributions $ P $ and $ Q $ have possibly infinite supports.
The notation is similar to the finite alphabet case, for example
\begin{equation}
 \begin{aligned}
  \Gamma^\sml{(1)} &= \Bigg\{ (p_i)_{i\in\mathbb{N}}\,:\,p_i\geq0\,,\,\mysum_{i} p_i = 1 \Bigg\},   \\
  \Gamma^\sml{(2)} &= \Bigg\{ (p_{i,j})_{i,j\in\mathbb{N}}\,:\,p_{i,j}\geq0\,,\,\mysum_{i,j} p_{i,j} = 1 \Bigg\}. 
 \end{aligned}
\end{equation}

Let also
$ \ell_1^\sml{(2)} = \big\{ (x_{i,j})_{ i,j \in \mathbb{N} } : \sum_{i,j} | x_{i,j} | < \infty \big\} $.
This is the familiar $ \ell_1 $ space, only defined for two-dimensional sequences.
It clearly shares all the essential properties of $ \ell_1 $, completeness being the
one we will exploit.
The metric understood is
\begin{equation}
  \norm{x-y} = \mysum_{i,j} |x_{i,j} - y_{i,j}|, 
\end{equation}
for $ x, y \in \ell_1^\sml{(2)} $.
In the context of probability distributions, this distance is usually called the total
variation distance (actually, it is twice the total variation distance, see \eqref{l1variation}).

\begin{proposition}
\label{comp}
 For any $ P, Q \in \Gamma^\sml{(1)} $ and $ m \in \mathbb{N} $, $ {\mathcal C}(P,Q) $
and $ {\mathcal C}(P,m) $ are compact.
\end{proposition}
\begin{proof}
 A metric space is compact if and only if it is complete and totally bounded \cite{bredon};
these facts are demonstrated below.
\end{proof}

\begin{lemma}
\label{complete}
  $ {\mathcal C}(P,Q) $ and $ {\mathcal C}(P,m) $ are complete metric spaces.
\end{lemma}
\begin{proof}
 It is enough to show that $ {\mathcal C}(P,Q) $ and $ {\mathcal C}(P,m) $ are closed
in $\ell_1^\sml{(2)}$ because closed subsets of complete spaces are always complete.
In other words, it suffices to show that for any sequence $ S_n \in {\mathcal C}(P,Q) $
converging to some $ S \in \ell_1^\sml{(2)} $ (in the sense that $ \norm{S_n - S} \to 0 $),
we have $ S \in {\mathcal C}(P,Q) $.
This is straightforward.
If $ S_n $ all have the same marginals ($ P $ and $ Q $), then $ S $ must also have
these marginals, for otherwise the distance between $ S_n $ and $ S $ would be lower
bounded by the distance between the corresponding marginals
\begin{equation}
  \mysum_{i,j} \left|S(i,j) - S_n(i,j)\right| \geq \mysum_{i} \bigg| \mysum_{j} ( S(i,j) - S_n(i,j) ) \bigg| 
\end{equation}
and hence could not decrease to zero.
The case of $ {\mathcal C}(P,m) $ is similar.
\end{proof}

For our next claim, recall that a set $ E $ is said to be totally bounded if it
has a finite covering by $ \epsilon $-balls, for any $ \epsilon > 0 $.
In other words, for any $ \epsilon > 0 $, there exist $ x_1, \ldots, x_K \in E $
such that $ E \subseteq \bigcup_k {\mathcal B}(x_k, \epsilon) $, where
$ {\mathcal B}(x_k, \epsilon) $ denotes the open ball around $ x_k $ of radius $ \epsilon $.
The points $ x_1, \ldots, x_K $ are then called an $ \epsilon $-net for $ E $.

\begin{lemma}
\label{bounded}
  $ {\mathcal C}(P,Q) $ and $ {\mathcal C}(P,m) $ are totally bounded.
\end{lemma}
\begin{proof}
 We prove the statement for $ {\mathcal C}(P,Q) $, the proof for $ {\mathcal C}(P,m) $ 
is very similar.
Let $ P $, $ Q $, and $ \epsilon > 0 $ be given.
We need to show that there exist distributions $ S_1, \ldots, S_K \in {\mathcal C}(P,Q) $
such that $ {\mathcal C}(P,Q) \subseteq \bigcup_{k} {\mathcal B}(S_k, \epsilon) $, and
this is done in the following.
There exists $N$ such that $ \sum_{ i = N + 1 }^{\infty} p_i < \frac{ \epsilon }{ 6 }$
and $ \sum_{ j = N + 1 }^{\infty} q_j < \frac{ \epsilon }{ 6 }$.
Observe the truncations of the distributions $ P $ and $ Q $, namely $ ( p_1, \ldots, p_N ) $
and $ ( q_1, \ldots, q_N ) $.
Assume that $ \sum_{i=1}^N p_i \geq \sum_{j=1}^N q_j $, and let
$ r = \sum_{i=1}^N p_i - \sum_{j=1}^N q_j = \sum_{j=N+1}^\infty q_j - \sum_{i=N+1}^\infty p_i $
(otherwise, just interchange $ P $ and $ Q $).
Now let $ P^{(N)} = ( p_1, \ldots, p_N ) $ and $ Q^{(N,r)} = ( q_1, \ldots, q_N, r ) $,
and observe $ {\mathcal C}( P^{(N)}, Q^{(N,r)} ) $.
(Adding $ r $ was necessary for $ {\mathcal C}( P^{(N)}, Q^{(N,r)} ) $ to be nonempty.)
This set is closed (see the proof of Lemma \ref{complete}) and bounded in
$ \mathbb{R}^{ N \times (N+1) } $, and hence it is compact by the Heine-Borel theorem.
This further implies that it is totally bounded and has an $ \frac{ \epsilon }{ 6 } $-net,
i.e., there exist $ T_1, \ldots, T_K \in {\mathcal C}( P^{(N)}, Q^{(N,r)} ) $ such that
$ {\mathcal C}( P^{(N)}, Q^{(N,r)} ) \subseteq \bigcup_{k} {\mathcal B}( T_k, \frac{ \epsilon }{ 6 } ) $.
Now construct distributions $ S_1, \ldots, S_K \in {\mathcal C}(P, Q) $ by ``padding"
$ T_1, \ldots, T_K $.
Namely, take $ S_k $ to be any distribution in $ {\mathcal C}(P,Q) $ which coincides with
$ T_k $ on the first $ N \times N $ coordinates, for example
\begin{equation}
  S_k(i,j) = \begin{cases}
               T_k(i,j) ,  &  i,j \leq N  \\ 
               0 ,         &  i > N, j\leq N  \\ 
               T_k(i,N+1)\cdot{q_j}/{\sum_{l=N+1}^\infty q_l} ,         &  i\leq N, j > N  \\ 
               p_i\cdot{q_j}/{\sum_{l=N+1}^\infty q_l} ,         &  i,j > N .
             \end{cases}
\end{equation}
Understanding that $ T_k(i,j) = 0 $ for $ i > N $ or $ j > N + 1 $, we have
\begin{equation}
\begin{aligned}
  \norm{ T_k - S_k } &=     \sum_{i=1}^\infty \sum_{j=N+2}^\infty S_k(i,j) + \sum_{i=1}^\infty | T_k(i,N+1) - S_k(i,N+1) |  \\
                     &\leq  \sum_{i=1}^\infty \sum_{j=N+1}^\infty S_k(i,j) + \sum_{i=1}^\infty T_k(i,N+1)  \\
                     &=     \sum_{j=N+1}^\infty q_j + r  <  \frac{ \epsilon }{ 6 } + \frac{ \epsilon }{ 6 }  =  \frac{ \epsilon }{ 3 } .
\end{aligned}
\end{equation}
We prove below that $ S_k $'s are the desired $ \epsilon $-net for $ {\mathcal C}(P,Q) $,
i.e., that any distribution $ S \in {\mathcal C}(P,Q) $ is at distance at most $ \epsilon $
from some $ S_\ell $, $ \ell \in \{ 1, \ldots, K \} $ ($ \norm{ S-S_\ell } < \epsilon $).
Observe some $ S \in {\mathcal C}(P,Q) $, and let $ S' $ be its $ N \times N $ truncation
\begin{equation}
  S'(i,j) = \begin{cases}
               S(i,j) ,  &  i,j\leq N  \\ 
               0 ,       &  \text{otherwise}.
             \end{cases}
\end{equation}
Note that $ S' $ is not a distribution, but that does not affect the proof.
Note also that the marginals of $ S' $ are bounded from above by the marginals of $ S $,
namely $ q_j' = \sum_i S'(i,j) \leq q_j $ and $ p_i' = \sum_j S'(i,j) \leq p_i $.
Finally, we have $ \norm{ S - S' } < \frac{ \epsilon }{ 3 }$ because the total mass of
$ S $ on the coordinates where $ i > N $ or $ j > N $ is at most $ \frac{ \epsilon }{ 3 } $.
The next step is to create $ S'' \in {\mathcal C}( P^{(N)}, Q^{(N,r)} ) $ by adding masses
to $ S' $ on the $ N \times (N + 1) $ rectangle.
One way to do this is as follows.
Let
\begin{align}
  u_i &= \begin{cases}
               p_i - p_i' ,  &  i \leq N  \\ 
               0 ,           &  i > N 
        \end{cases} ,
  \\
  v_j &= \begin{cases}
               q_j - q_j' ,  &  j \leq N  \\ 
               r ,           &  j = N + 1  \\
               0 ,           &  j > N + 1 
        \end{cases} ,
\end{align}
and let $ U = (u_i) $, and $ V = (v_j) $, and $ c = \sum_i u_i = \sum_j v_j $ (to see that
these two sums are equal write
$ \sum_{i} u_i - \sum_{j} v_j = \sum_{i = 1}^N (p_i - p_i') - \sum_{j = 1}^N (q_j - q_j') - r $
which is equal to zero by the definition of $ r $ and due to the fact that
$ \sum_{i = 1}^N p_i' = \sum_{j = 1}^N q_j' = \sum_{i=1}^N \sum_{j = 1}^N S(i,j) $).
Now define $ S'' $ by
\begin{equation}
  S'' = S' + \frac{1}{c}U\times V . 
\end{equation}
It is easy to verify that $ S'' \in {\mathcal C}( P^{(N)}, Q^{(N,r)} ) $ and that
$ \norm{ S'-S'' } < \frac{ \epsilon }{ 6 } $ because the total mass added is
\begin{equation}
 \begin{aligned}
  c = \mysum_{i=1}^{N} ( p_i - p_i' ) &= \mysum_{i=1}^{N} \mysum_{j=1}^{\infty} ( S(i,j) -  S'(i,j) )   \\
                                      &= \mysum_{i=1}^{N} \mysum_{j=N+1}^{\infty} S(i,j)  \\
                                      &\leq \mysum_{j=N+1}^{\infty} q_j < \frac{\epsilon}{6} . 
 \end{aligned}
\end{equation}
Now recall that $ T_k $'s form an $ \frac{ \epsilon }{ 6 } $-net for
$ {\mathcal C}( P^{(N)}, Q^{(N,r)} ) $ and consequently that there exists some $ T_\ell $,
$ \ell \in \{ 1, \ldots, K \} $, with $ \norm{ S''-T_\ell } < \frac{ \epsilon }{ 6 } $.
To put this all together, write
\begin{equation}
  \norm{S-S_\ell} \leq \norm{S-S'} + \norm{S'-S''} +  
                       \norm{S''-T_\ell} + \norm{T_\ell-S_\ell} < \epsilon , 
\end{equation}
which completes the proof.
\end{proof}

The following claim shows that imposing certain restrictions on the marginal
distributions ensures the continuity of Shannon information measures and
existence of their extrema. In contrast, without any restrictions, these
functionals are known to be discontinuous at every point of $ \Gamma^\sml{(2)} $.
(Entropy is, however, sequentially continuous at any power bounded distribution
in the topology of information divergence \cite[Thm 21]{harr2}; this weaker
notion of continuity is useful for many applications in probability theory.)

\begin{theorem}
\label{cont}
 Let $ P, Q \in \Gamma^\sml{(1)} $ and $ m \in \mathbb{N} $, and assume that
$ Q $ has finite entropy. Then Shannon information measures are uniformly
continuous and attain their extrema over $ {\mathcal C}(P,Q) $ and
$ {\mathcal C}(P,m) $.
\end{theorem}
\begin{proof}
 Continuity over $ {\mathcal C}(P,Q) $ and $ {\mathcal C}(P,m) $ is a
special case of \cite[Thm 4.3]{harr} and can thus be established
by exhibiting \emph{cost-stable codes} for these statistical models.
We also give here a more direct proof (which can be extended to prove
Theorem \ref{allshannon}).
Write
\begin{equation}
\label{mutinf}
   H_Y(S) = I_{X;Y}(S) + H_{Y|X}(S) . 
\end{equation}
The functional
$ H_{Y|X}(S) = \mysum_{i,j} s_{i,j} \log \frac{ p_i }{ s_{i,j} } $ is
lower semi-continuous because it is a sum of nonnegative continuous
functions. The functional $ I_{X;Y} $ is also lower semi-continuous since
\begin{equation}
\label{KL}
  I_{X;Y}(S) = D(S||P\times Q), 
\end{equation}
and information divergence $ D(S||T) $ is known to be jointly lower
semi-continuous in the distributions $ S $ and $ T $ \cite[Thm 3.1]{topsoe2}.
But since the sum of these two functionals is a constant
$ H_Y(S) = H(Q) < \infty $, both of them must be continuous. The
continuity of $ H_{X|Y} $ and $ H_{X,Y} $ follows from \eqref{identity}. 

Now consider $ {\mathcal C}(P,m) $. In \cite{ho} it is shown that
$ H(Y|X) $ and $ I(X;Y) $ are continuous when the alphabet of $ Y $
is finite and fixed, which is what we have here. And since $ H(X) = H(P) $
is fixed, $ H(X|Y) $ and $ H(X,Y) $ are also continuous (if $ H(P) = \infty $
then they are infinite over the entire $ {\mathcal C}(P,m) $, but we
also take this to mean that they are continuous).

Uniform continuity and the fact that the above functionals attain their
extrema over $ {\mathcal C}(P,Q) $ and $ {\mathcal C}(P,m) $ now follow
from the compactness of these domains.
\end{proof}

Regarding the extrema of information measures, we note that Proposition
\ref{Hmin} fails in the case of unbounded alphabets (when
$ (P,Q) \in \Gamma^\sml{(1)} \times \Gamma^\sml{(1)} $). Namely, the
functional $ H_{\min}(P,Q) $ is discontinuous at every $ (P,Q) $ with
$ H(P), H(Q) < \infty $. This follows easily from the discontinuity of
entropy. However, Proposition \ref{Imax} remains valid because $ I_{X;Y} $
is continuous when one of the alphabets is finite \cite{ho}.

The argument in the proof of Theorem \ref{cont} can easily be adapted to
prove the following more general claim which gives necessary and sufficient
conditions for the convergence of entropy in terms of other information
measures.
\begin{theorem}
 \label{allshannon}
  Let $ S \in \Gamma^\sml{(2)} $ be a bivariate probability distribution with
  finite entropy, $ H_{X,Y}(S) < \infty $. Then the following statements
  are equivalent:
  \begin{enumerate}
   \item $ H_{X,Y} $ is continuous at $ S $,
   \item $ H_X $ and $ H_Y $ are continuous at $ S $,
   \item $ I_{X;Y} $, $ H_{X|Y} $, and $ H_{Y|X} $ are continuous at $ S $.
  \end{enumerate}
\end{theorem}
\begin{proof}
  Note first that when $ S_n \to S $, then also $ P_n \to P $, $ Q_n \to Q $,
  and $ P_n \times Q_n \to P \times Q $, where $ P_n, Q_n $, and $ P, Q $
  are the marginals of $ S_n $ and $ S $, respectively. Now all implications
  follow from \eqref{identity} and the fact that the functionals in
  question are lower semi-continuous.
\end{proof}

\section{Metrics from couplings}
\label{distance}

Apart from many of their other uses, couplings are very convenient for defining
metrics on the space of probability distributions.
There are many interesting metrics defined via so-called ``optimal" couplings.
We illustrate this point below using one familiar example, and then define new
information-theoretic metrics based on the minimum entropy coupling.
Similar approaches are also used in the literature for defining measures of
distortion (that are not necessarily metrics) between random objects; see, e.g.,
\cite{gray} for the corresponding definitions and their applications in rate
distortion theory.

Given two probability distributions $ P $ and $ Q $, one could measure the
``distance" between them as follows.
Consider all possible random pairs $\left(X,Y\right)$ with marginal distributions
$ P $ and $ Q $.
Then define some measure of dissimilarity of $ X $ and $ Y $, for example
$ {\mathbb{P}}( X \neq Y ) $, and minimize it over all such couplings (minimization
is necessary for the triangle inequality to hold).
Indeed, this example yields the well-known total variation distance \cite{peres}
\begin{equation}
\label{variation}
  d_\textsc{tv}(P,Q) = \inf_{{\mathcal{C}}(P,Q)} {\mathbb{P}}(X\neq Y) ,
\end{equation}
where the infimum is taken over all joint distributions of the random vector $ (X, Y) $
with marginals $ P $ and $ Q $.
Notice that a minimizing distribution (called a \emph{maximal coupling}, see, e.g.,
\cite{sason}) in \eqref{variation} is ``easy" to find because $ {\mathbb{P}}( X \neq Y ) $
is a linear functional in the joint distribution of $ (X, Y) $.
For the same reason, $ d_\textsc{tv}(P, Q) $ is easy to compute, but this is also
clear from the identity \cite{peres}
\begin{equation}
\label{l1variation}
  d_\textsc{tv}(P,Q) = \frac{1}{2}\mysum_{i} |p_i-q_i|. 
\end{equation}
We next define information-theoretic distances in a similar manner.

\subsection{Entropy metrics}

Let $(X,Y)$ be a random pair with joint distribution $S$ and marginal distributions 
$P$ and $Q$. The total information contained in these random variables is $H(X,Y)$, 
while the information contained simultaneously in both of them (or the information
they contain about each other) is measured by $I(X;Y)$. One is then tempted to take
as a measure of their dissimilarity\footnote{\,Drawing a familiar information-theoretic 
Venn diagram \cite{cover} makes it clear that this is a measure of ``dissimilarity" 
of two random variables.}
\begin{equation}
 \label{delta}
  \Delta_1(X,Y) \equiv \Delta_1(S) = H(X,Y) - I(X;Y)    
                                   = H(X|Y) + H(Y|X).
\end{equation}
Indeed, this quantity (introduced by Shannon \cite{shannon}, and usually referred to 
as the \emph{entropy metric} \cite{csiszar}) satisfies the properties of a pseudometric 
\cite{csiszar}. In a similar way one can show that the following is also a pseudometric
\begin{equation}
 \label{deltamaks}
  \Delta_\infty(X,Y) \equiv \Delta_\infty(S) = \max\big\{H(X|Y),H(Y|X)\big\}, 
\end{equation}
as are the normalized variants of $\Delta_1$ and $\Delta_\infty$ \cite{metrics}. 
These pseudometrics have found numerous applications (see for example \cite{yao}) 
and have also been considered in an algorithmic setting \cite{kolmo}. 

One can further generalize these definitions to obtain a family of pseudometrics. 
This generalization is akin to the familiar $\ell_p$ distances. Let 
\begin{equation}
 \label{deltap}
  \Delta_p(X,Y) \equiv \Delta_p(S) = \big( H(X|Y)^p + H(Y|X)^p \big)^\frac{1}{p} , 
\end{equation}
for $p\geq1$. Observe that $ \lim_{p\to\infty} \Delta_p(X,Y) = \Delta_\infty(X,Y) $,
justifying the notation. 
\begin{proposition}
  $\Delta_p(X,Y)$ satisfies the properties of a pseudometric, for all $ p \in [1, \infty] $.
\end{proposition}
\begin{proof}
  Nonnegativity and symmetry are clear, as is the fact that $ \Delta_p(X,Y) = 0 $ if
(but not only if) $ X = Y $ with probability one.
The triangle inequality remains.
Following the proof for $ \Delta_1 $ from \cite[Lemma 3.7]{csiszar}, we first observe
that $ H(X|Y) \leq H(X|Z) + H(Z|Y) $, wherefrom
\begin{equation}
    \Delta_p(X,Y) \leq \Big( \big( H(X|Z) + H(Z|Y) \big)^p +  
                             \big( H(Y|Z) + H(Z|X) \big)^p \Big)^\frac{1}{p} . 
\end{equation}
Now apply the Minkowski inequality ($ \pnorm{a+b} \leq \pnorm{a} + \pnorm{b} $) to
the vectors $ a = ( H(X|Z), H(Z|X) ) $ and $ b = ( H(Z|Y), H(Y|Z) ) $ to get
\begin{equation}
  \Delta_p(X,Y) \leq \Delta_p(X,Z) + \Delta_p(Z,Y) , 
\end{equation}
which was to be shown.
\end{proof}
\begin{remark}
 $ \Delta_p $ are pseudometrics on the space of random variables over the same 
probability space.
Namely, for $ \Delta_p $ to be defined, the joint distribution of $ (X, Y) $ must
be given because joint entropy and mutual information are not defined otherwise.
Equation \eqref{deltainf} below defines the distance between random variables (more 
precisely, between their distributions) that does not depend on the joint distribution. 
\end{remark}

Having defined measures of dissimilarity, we can now define the corresponding 
distances
\begin{equation}
 \label{deltainf}
  \underline{\Delta}_p(P,Q) = \inf_{S\in{\mathcal{C}}(P,Q)} \Delta_p(S) . 
\end{equation}
The case $p=1$ has also been analyzed in some detail in \cite{vidyasagar}, motivated by 
the problem of optimal order reduction for stochastic processes. 

\begin{proposition}
 $\underline{\Delta}_p$ is a pseudometric on $\Gamma^\sml{(1)}$, for any $p\in[1,\infty]$. 
\end{proposition}
\begin{proof}
 Since ${\Delta}_p$ satisfies the properties of a pseudometric, we only need 
to show that these properties are preserved under the infimum. 
Nonnegativity and symmetry are clearly preserved.
Also, if $ P = Q $ then $ \underline{\Delta}_p(P,Q) = 0 $.
This is because $ S = \text{diag}(P) $ (distribution with masses $ p_i = q_i $ on
the diagonal and zeros elsewhere) belongs to $ {\mathcal{C}}(P,Q) $ in this case,
and for this distribution we have $ H_{X|Y}(S) = H_{Y|X}(S) = 0 $.
The triangle inequality is left. Let $X$, $Y$ and $Z$ be random variables with
distributions $ P $, $ Q $ and $ R $, respectively, and let their joint distribution
be specified.
We know that $ \Delta_p(X,Y) \leq \Delta_p(X,Z) + \Delta_p(Z,Y) $, and we have to prove
that
\begin{equation}
  \inf_{{\mathcal C}(P,Q)} \Delta_p(X,Y) \leq  
                 \inf_{{\mathcal C}(P,R)} \Delta_p(X,Z) + 
                 \inf_{{\mathcal C}(R,Q)} \Delta_p(Z,Y). 
\end{equation}
Since, from the above, 
\begin{equation}
  \inf_{{\mathcal C}(P,Q)} \Delta_p(X,Y) = 
               \inf_{{\mathcal C}(P,Q,R)} \Delta_p(X,Y)   
                \leq  \inf_{{\mathcal C}(P,Q,R)} \big\{\Delta_p(X,Z) + \Delta_p(Z,Y)\big\} 
\end{equation}
it suffices to show that 
\begin{equation}
\label{triangle}
   \inf_{{\mathcal C}(P,Q,R)} \big\{\Delta_p(X,Z) + \Delta_p(Z,Y)\big\} =   
     \inf_{{\mathcal C}(P,R)} \Delta_p(X,Z) + \inf_{{\mathcal C}(R,Q)} \Delta_p(Z,Y). 
\end{equation}
($ {\mathcal C}(P,Q,R) $ denotes the set of all three-dimensional distributions with
one-dimensional marginals $ P $, $ Q $, and $ R $, as the notation suggests.)
Let $ T \in {\mathcal C}(P,R) $ and $ U \in {\mathcal C}(R,Q) $ be the optimizing
distributions on the right-hand side (rhs) of \eqref{triangle}.
Observe that there must exist a joint distribution $ W \in {\mathcal C}(P,Q,R) $
consistent with $ T $ and $ U $ (for example, take $ w_{i,j,k} = t_{i,k} u_{k,j} / r_k $).
Since the optimal value of the lhs is less than or equal to the value at $ W $,
we have shown that the lhs of \eqref{triangle} is less than or equal to the rhs.
For the opposite inequality observe that the optimizing distribution on the lhs
of \eqref{triangle} defines some two-dimensional marginals $ T \in {\mathcal C}(P,R) $
and $ U \in {\mathcal C}(R,Q) $, and the optimal value of the rhs must be less than
or equal to its value at $ (T, U) $.
\end{proof}
\begin{remark}
 If $ \underline{\Delta}_p(P,Q) = 0 $, then $ P $ and $ Q $ are permutations of
each other.
This is easy to see because only in that case can one have $ H_{X|Y}(S) = H_{Y|X}(S) = 0 $,
for some $ S \in \mathcal{C}(P,Q) $.
Therefore, if distributions are identified up to a permutation, then $ \underline{\Delta}_p $
is a metric.
In other words, if we think of distributions as unordered multisets of nonnegative
numbers summing up to one, then $ \underline{\Delta}_p $ is a metric on such a space.
\end{remark}

Observe that the distribution defining $\underline{\Delta}_p(P,Q)$ is in fact the 
minimum entropy coupling. 
Thus minimum entropy couplings define the distances $\underline{\Delta}_p$ on the space 
of probability distributions in the same way as the maximal coupling defines the total variation 
distance. However, there is a sharp difference in the computational complexity of finding 
these two couplings, as illustrated in the previous section.

\subsection{Some properties of entropy metrics}

Note that $\underline{\Delta}_p$ is a monotonically nonincreasing function of 
$p$. In the following, we will mostly deal with $\underline{\Delta}_1$ and $\underline{\Delta}_\infty$, 
but most results concerning bounds and convergence can be extended to all $\underline{\Delta}_p$ 
based on this monotonicity property.

The metric $\underline{\Delta}_1$ gives an upper bound on the entropy difference 
$|H(P)-H(Q)|$. Namely, since
\begin{equation}
 \begin{aligned}
  |H(X)-H(Y)|&=|H(X|Y)-H(Y|X)|   \\ 
             &\leq H(X|Y)+H(Y|X)   \\
             &=\Delta_1(X,Y), 
 \end{aligned}
\end{equation}
we conclude that
\begin{equation}
\label{deltaent}
  |H(P)-H(Q)|\leq\underline{\Delta}_1(P,Q). 
\end{equation}
Therefore, entropy is continuous with respect to this pseudometric, i.e., 
$\underline{\Delta}_1(P_n,P)\to0$ implies $H(P_n)\to H(P)$. Bounding the entropy 
difference is an important problem in various contexts and it has been studied 
extensively, see for example \cite{ho2,sason}. In particular, \cite{sason} studies 
bounds on the entropy difference via maximal couplings, whereas \eqref{deltaent} 
is obtained via minimum entropy couplings. 

Another useful property, relating the entropy metric $\underline{\Delta}_1$ and 
the total variation distance, follows from Fano's inequality
\begin{equation}
  H(X|Y) \leq \mathbb{P}(X\neq Y)\log(|X|-1) + h(\mathbb{P}(X\neq Y)), 
\end{equation}
where $|X|$ denotes the size of the support of $X$, and 
$h(x)=-x\log_2(x) - (1-x)\log_2(1-x)$, $x\in[0,1]$, is the binary entropy function. 
Evaluating the rhs at the maximal coupling (the joint distribution which 
minimizes $\mathbb{P}(X\neq Y)$), and the lhs at the minimum entropy coupling, we 
obtain
\begin{equation}
  \underline{\Delta}_1(P,Q) \leq d_\textsc{tv}(P,Q)\log(|P||Q|) + 2h(d_\textsc{tv}(P,Q)). 
\end{equation}
This relation makes sense only when the alphabets (supports of $P$ and $Q$) are 
finite. When the supports are also fixed it shows that $\underline{\Delta}_1$ is 
continuous with respect to $d_\textsc{tv}$, i.e., that $d_\textsc{tv}(P_n,P)\to0$ implies 
$\underline{\Delta}_1(P_n,P)\to0$. By the Pinsker-Csisz\'ar-Kemperman inequality
\cite{csiszar}
\begin{equation}
\label{pinsker}
  D(P_n||P) \geq \frac{ 2 }{ \ln 2 } d_\textsc{tv}^2(P_n, P)
\end{equation}
it follows that $\underline{\Delta}_1$ is also continuous with respect to information 
divergence, i.e., $D(P_n||P)\to0$ implies $\underline{\Delta}_1(P_n,P)\to0$. 

The continuity of $\underline{\Delta}_1$ with respect to $d_\textsc{tv}$ fails in 
the case of infinite (or even finite, but unbounded) supports, which follows from 
\eqref{deltaent} and the fact that entropy is a discontinuous functional with respect 
to the total variation distance. One can, however, claim the following. 
\begin{proposition}
 If $P_n\to P$ in the total variation distance, and $H(P_n)\to H(P)<\infty$, then 
 $\underline{\Delta}_1(P_n,P)\to0$. 
\end{proposition}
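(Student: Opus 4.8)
The plan is to produce, for each $n$, an explicit coupling $S_n \in {\mathcal C}(P_n, P)$ whose dissimilarity $\Delta_1(S_n)$ tends to zero; since $\underline{\Delta}_1(P_n,P) = \inf_{S \in {\mathcal C}(P_n,P)} \Delta_1(S) \leq \Delta_1(S_n)$ and $\underline{\Delta}_1 \geq 0$, this upper bound alone yields the claim. The natural candidate is the \emph{maximal coupling} of $P_n$ and $P$ featured in this section: place $S_n(i,i) = \min\{P_n(i), P(i)\}$ on the diagonal and spread the residual mass off the diagonal so that the marginals are correct. The total off-diagonal mass is then exactly $d_{\text V}(P_n, P) = \frac{1}{2}\norm{P_n - P}$, which tends to $0$ by hypothesis.

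First I would compare $S_n$ with the diagonal coupling $\text{diag}(P)$ of $P$ with itself (masses $p_i$ on the diagonal, zero elsewhere), which lies in ${\mathcal C}(P,P)$ and satisfies $H_{X|Y}(\text{diag}(P)) = H_{Y|X}(\text{diag}(P)) = 0$, hence $\Delta_1(\text{diag}(P)) = 0$, because there $X = Y$ deterministically (the left-hand equality case of \eqref{ineqHx}). A direct estimate then shows $\norm{S_n - \text{diag}(P)} = 2\, d_{\text V}(P_n, P) \to 0$: the diagonal discrepancy $\sum_i \big(P(i) - \min\{P_n(i), P(i)\}\big) = \sum_i (P(i) - P_n(i))_+$ contributes $d_{\text V}(P_n, P)$, and the off-diagonal entries of $S_n$ contribute another $d_{\text V}(P_n, P)$.

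It then remains to control the marginal entropies and invoke Theorem \ref{allshannon} with limit point $S = \text{diag}(P)$. The $X$-marginal of $S_n$ is $P_n$ and its $Y$-marginal is $P$, so $H_X(S_n) = H(P_n) \to H(P) = H_X(\text{diag}(P))$ by hypothesis, while $H_Y(S_n) = H(P) = H_Y(\text{diag}(P))$ is constant; both limiting marginal entropies are finite. Theorem \ref{allshannon} then forces all the convergences in \eqref{allconverge}, in particular $H_{X|Y}(S_n) \to H_{X|Y}(\text{diag}(P)) = 0$ and $H_{Y|X}(S_n) \to H_{Y|X}(\text{diag}(P)) = 0$. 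Adding these gives $\Delta_1(S_n) \to 0$, which completes the argument.

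The main obstacle is not the construction but ensuring the finite-entropy hypothesis is genuinely used: as noted immediately before the statement, the Fano-based comparison between $\underline{\Delta}_1$ and $d_{\text V}$ breaks down for unbounded supports, so convergence in total variation alone cannot suffice. The assumption $H(P_n) \to H(P) < \infty$ enters precisely at the last step, supplying the marginal-entropy convergence required by Theorem \ref{allshannon}; without it, $\norm{S_n - \text{diag}(P)} \to 0$ would only give lower semicontinuity and the vacuous one-sided bound $\liminf_{n\to\infty} \Delta_1(S_n) \geq 0$.
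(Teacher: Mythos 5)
Your proof is correct, but it follows a genuinely different route from the paper's. The paper disposes of the claim in two lines by citing an external result (Ho and Verd\'u, \cite[Thm 17]{hoverdu}): if $d_{\text{V}}(P_{X_n},P_X)\to 0$ and $H(X_n)\to H(X)<\infty$, then $\mathbb{P}(X_n\neq Y_n)\to 0$ forces $H(X_n|Y_n)\to 0$; specializing to $P_{X_n}=P_n$, $P_X=P_{Y_n}=P$ and evaluating at the maximal coupling gives the claim. You instead stay entirely inside the paper: you build the maximal coupling $S_n$ explicitly, verify $\norm{S_n-\text{diag}(P)}=2\,d_{\text{V}}(P_n,P)\to 0$ (your accounting of the diagonal deficit and the off-diagonal residual mass is right, and the residual coupling necessarily puts zero mass on the diagonal since $(P_n(i)-P(i))_+$ and $(P(i)-P_n(i))_+$ cannot both be positive), check that both marginal entropies converge to the finite limits $H(P)$, and invoke Theorem \ref{allshannon} with limit point $\text{diag}(P)$, for which both conditional entropies vanish. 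Both arguments ultimately rest on bounding $\underline{\Delta}_1$ by $\Delta_1$ at the maximal rather than the minimum entropy coupling; what yours buys is self-containedness and a transparent view of exactly where the hypothesis $H(P_n)\to H(P)<\infty$ enters (it supplies the marginal-entropy convergence that Theorem \ref{allshannon} needs, and your closing remark correctly notes that lower semicontinuity alone would give nothing), while the paper's version is shorter and leans on a sharper quantitative result from the literature.
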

\begin{proof}
 In \cite[Thm 17]{hoverdu} it is shown that if $d_\textsc{tv}(P_{X_n},P_{X})\to0$ 
 and $H(X_n)\to H(X)<\infty$, then $\mathbb{P}(X_n\neq Y_n)\to0$ implies $H(X_n|Y_n)\to0$, 
 for any r.v.'s $Y_n$. Our claim then follows by specifying $P_{X_n}=P_n$, $P_{X}=P_{Y_n}=P$, 
 and taking infimums on both sides of the implication. 
\end{proof}

It should be pointed out that sharper bounds than the above can be obtained
by using  $\underline{\Delta}_\infty$ instead of $\underline{\Delta}_1$.
For example
\begin{equation}
  |H(P)-H(Q)|\leq\underline{\Delta}_\infty(P,Q), 
\end{equation}
(with equality whenever the minimum entropy coupling of $P$ and $Q$ is such that
$Y$ is a function of $X$, or vice versa), and
\begin{equation}
  \underline{\Delta}_\infty(P,Q)\leq d_\textsc{tv}(P,Q)\log\max\{|P|,|Q|\} + h(d_\textsc{tv}(P,Q)) .
\end{equation}

We conclude this section with an interesting remark on the conditional entropy. 
First observe that the pseudometric $ \Delta_p $ ($ \underline{\Delta}_p $) can
also be defined for random vectors (multivariate distributions).
For example, $ \Delta_1((X,Y),(Z)) $ is well-defined by $ H(X,Y|Z) + H(Z|X,Y) $.
If the distributions of $ (X,Y) $ and $ Z $ are $ S $ and $ R $, respectively,
then minimizing the above expression over all tri-variate distributions with the
corresponding marginals $ S $ and $ R $ would give $ \underline{\Delta}_1(S,R) $.
Furthermore, random vectors can even overlap.
For example, we have
\begin{equation}
  \Delta_1((X),(X,Y)) = H(X|X,Y) + H(X,Y|X) = H(Y|X), 
\end{equation}
because the first summand is equal to zero.
Therefore, the conditional entropy $ H(Y|X) $ can be seen as the distance between
the pair $ (X, Y) $ and the conditioning random variable $ X $.
If the distribution of $ (X, Y) $ is $ S $, and the marginal distribution of $ X $
is $ P $, then
\begin{equation}
  \underline{\Delta}_1(P,S) = H_{Y|X}(S) ,
\end{equation}
because $ S $ is the only distribution consistent with these constraints.
In fact, we have $ \underline{\Delta}_p(P,S) = H_{Y|X}(S) $ for all $ p \in [1, \infty] $.
Therefore, \emph{the conditional entropy $ H(Y|X) $ represents the distance between
the joint distribution of $ (X,Y) $ and the marginal distribution of the conditioning
random variable $ X $}.

\section{Conclusions}

We have presented an information-theoretic view on probability distributions
with fixed margin\-als.
This well-studied topic still provides many interesting research problems and
enables an interplay of several different fields.
Various optimization problems associated with information measures over such sets
of distributions were analyzed and shown to be intractable.
Continuity questions and the existence of extrema of these functionals were also
addressed (in the case of countably infinite alphabets).
A family of information-theoretic pseudometrics was defined and their properties
and relations to other metrics investigated.
A central notion that was introduced in the paper and that represents a connecting
point of the above-mentioned results is the minimum entropy coupling;
the relevance of this notion was demonstrated in several respects.

\section*{Acknowledgments}

The authors would like to thank the reviewers for carefully reading the
manuscript and for providing many suggestions on how to improve it.

\bibliographystyle{model1-num-names}

\end{document}